\theoremstyle{plain}
\newtheorem{theorem}{Theorem}
\newcommand{\la}{\langle}
\newcommand{\ra}{\rangle}
\newcommand{\lam}{\lambda}
\newcommand{\ti}{\tilde}
\newcommand{\da}{\dagger}
\newcommand{\De}{\Delta}
\newcommand{\al}{\alpha}
\newcommand{\si}{\sigma}
\newcommand{\om}{\omega}
\newcommand{\non}{\nonumber}
\newcommand{\ep}{\epsilon}
\newcommand{\blue}[1]{\textcolor{blue}{#1}}
\def\oe#1{{ Opt.\ Express} {\bf#1}}
\def\jpb#1{{ J.\ Phys.\ B} {\bf#1}}
\def\jpa#1{{ J.\ Phys.\ A} {\bf#1}}
\def\pra#1{{ Phys.\ Rev. A\/} {\bf#1}}
\def\prb#1{{ Phys.\ Rev. B\/} {\bf#1}}
\def\prl#1{{ Phys.\ Rev.\ Lett.} {\bf#1}}
\def\sci#1{{ Science} {\bf#1}}
\def\pla#1{{ Phys.\ Lett. A\/} {\bf#1}}
\def\rmp#1{{ Rev. \ Mod. \ Phys.} {\bf#1}}
\def\nat#1{{ Nature} {\bf#1}}
\def\njp#1{{ New. J. \ Phys.} {\bf#1}}
\begin{document}

\title{Fundamental Limitation on Cooling under Classical Noise}

\author{Jun Jing$^{1,2}$, Ravindra W.  Chhajlany$^{3}$, and Lian-Ao Wu$^{2,4}$}\thanks{Author to whom any correspondence should be addressed. Email address: lianao.wu@ehu.es}

\affiliation{$^{1}$Institute of Atomic and Molecular Physics and Jilin Provincial Key Laboratory of Applied Atomic and Molecular Spectroscopy, Jilin University, Changchun 130012, Jilin, China \\ $^{2}$Department of Theoretical Physics and History of Science, The Basque Country University (EHU/UPV), PO Box 644, 48080 Bilbao, Spain \\ $^{3}$Faculty of Physics, Adam Mickiewicz University, Umultowska 85, 61-614 Pozna{\'n}, Poland \\$^{4}$Ikerbasque, Basque Foundation for Science, 48011 Bilbao, Spain}

\date{\today}

\begin{abstract}
We prove a general theorem that the action of arbitrary classical noise or random unitary channels can {\it not} increase the maximum population of {\it any} eigenstate of an open quantum system, assuming initial system-environment factorization. Such factorization is the conventional starting point for descriptions of open system dynamics. In particular, our theorem implies that a system can {\it not} be ideally cooled down unless it is initially prepared as a pure state. The resultant inequality rigorously constrains the possibility of cooling the system solely through temporal manipulation, {\it i.e.,} dynamical control over the system Hamiltonian without resorting to measurement based cooling methods.
\end{abstract}

\pacs{03.65.Ta, 37.10.-x, 72.10.Di}

\maketitle

\section{Introduction}

Cooling and, more generally, pure-state preparation~\cite{gscool,sideband,gscool2,YuasaPRL03,LiPRB11} of a microscopic or mesoscopic open system (small thermal object)~\cite{BreuerBook} is of paramount importance to many intriguing quantum technologies and engineering of low temperature quantum phases, in general. Examples of applications include quantum simulations of many body physics~\cite{Lewenstein-book} on a variety of platforms such as cold atoms and molecules, trapped ions and nanophotonic systems. Similarly, quantum computers~\cite{Zoller95}, the promising quantum adiabatic computing (QAC) paradigm~\cite{AQC1,AQC2,AQC3,Dwave}, quantum communication ~\cite{Zeilinger}, dynamically enhanced nuclear polarization~\cite{DNP1,DNP2} and quantum metrology are merely a few prominent examples of contemporary applications where significant control over quantum states needs to be exercised. More specifically, for perfect realization of quantum logic operations, qubits initially need to be cooled down to the ground state of motion prior to coherent manipulation~\cite{WisemanBook}. Any cooling scheme, e.g., bang-bang cooling~\cite{bang}, single-shot state-swapping cooling~\cite{shot}, and sideband cooling~\cite{sb1,sb2,sb3,sb4}, cannot be performed when the system is isolated~\cite{BreuerBook}.

The quantum adiabatic computation is an interesting paradigm for universal quantum computation. Here the solution to a hard problem is encoded in the ground state of a many body Hamiltonian, {\it i.e.,} the computer. To reach the solution, the computer is initialized to the ground state of some Hamiltonian that can be easily prepared. The initial ground state is then transported adiabatically~\cite{Berry1,Berry2,Zanardi,Lidar1,Lidar2,Lidar3} to the target ground state encoding the solution. In principle, adiabaticity suppresses errors in the preparation of the final ground state by overcoming the problem of energy relaxation~\cite{GsC1,GsC2,GsC3,JocobsBook} as the system at all times is kept in the ground state of the instantaneous Hamiltonian during evolution. However, the rate of change of the Hamiltonian control parameters, and so the protocol's running time, scales inversely with the square of the spectral gap to the lowest excitations. In practice, the system is always excited during the protocol, most seriously because one must generically move through regions in parameter space where the gap is very small or closes completely. Apart from this, the system is never truly isolated from its environment, which also results in excitations. One solution to overcome this problem is to combine the quasi-adiabatic evolution with active cooling to suppress such errors generated via excitations during the running of the protocol.

\blue{Addressing feasibility of such schemes motivates the search for a better understanding of the description of cooling effects in open system dynamics, in particular as described below, with classical noise. Cooling setups consist of a small target object (e.g., a mechanical resonator) and an ancillary system (e.g., a qubit) as the {\em entire} system  which is embedded in a quantum environment. The three entities seem to be equally crucial in a cooling process. The dynamics of the {\em entire} system is supposed to be governed by the conventional quantum Markovian master equation~\cite{note}.  Recently it was interestingly shown that fully quantum-mechanical models under the Born-Markov approximation may be mapped sometimes to a quantum system under classical noise~\cite{quantumclassical}. On the one hand, this inspires the question about whether the Born-Markov approximation generically allows an arbitrary full quantum bath to be equivalent to some corresponding classical noise. If so, can the {\em entire} system be cooled by the equivalent classical noise via the quantum Markovian master equation? On the other hand, is it possible that this specific classical-quantum equivalence is fake due to the Born-Markov approximation? These issues are too difficult to be solved in generality with any known analytical and numerical techniques. Therefore, setting up strict quantum bounds is absolutely necessary in studying cooling problems. An example is the recently proposed ``counterintuitive'' protocols as cooling by heating~\cite{cbh1,cbh2} with the help of ``incoherent thermal quantum noise'' from a high-temperature bath, where we know "quantum noise" could be equivalent to a corresponding classical noise as long as the Born-Markov approximation is used. In Ref.~\cite{cbh1}, the authors consider an ancillary system of two optical modes coupled to a mechanical degree of freedom and find the mechanical oscillator can be cooled down to an extent by heating one of the optical modes {\it i.e.}, increasing its thermal state population. It is thus interesting to consider the constraints on the types of processes that can be realized under restricted operations such as evolution under classical noise, and to unambiguously identify the origin of such counter-intuitive effects. }

Interestingly, it has been recently proved that exact ground state cooling is forbidden when one assumes factorization of the initial state of the system from the bath state~\cite{WuSR13}. This is remarkable since initial system-bath product state factorization is a common condition adopted in the derivation of master equations or Kraus operator representations describing open system dynamics~\cite{BreuerBook,Kraus2}. Here we ask the less stringent question of whether approximate cooling --- understood as increasing the ground state population --- can be achieved under such system-bath factorization. We consider the case of coupling the system to classical environmental noise which can be thought of as stochastically affecting the control parameters of the system Hamiltonian. We find that under such conditions, even approximate cooling is impossible.

\section{No-go theorem for cooling an open system under classical noise}

Classical noise~\cite{GardinerBook,SpinCnoise} corresponds to a special form of the system-environment interaction Hamiltonian $H_I=\sum_jA_jB_j$, where $A_j$'s are Hermitian operators in the Hilbert space of the system and $B_j$'s are environmental operators, when it can be semi-classically approximated by $H_I=\sum_jA_j\la B_j\ra$, where the $\la B_j\ra$'s are now c-numbers (instead of operators) determined by the random states of the environment, time-independent or time-dependent. This yields a stochastic Hamiltonian acting on the system, $H_\lam=H_0(t)+H_I(\lam)$~\cite{note2}, where the system bare Hamiltonian $H_0(t)$ is in general time-dependent which might take into account the general possibility of control via external parameters, and $\lam\equiv\{\la B_j\ra\}$ represents a random parameter characterizing a particular realization of the system evolution.

The system evolution determined by $H_\lambda$ corresponding to a particular realization of the stochastic environmental parameters is unitary, $\rho_i \mapsto K(\lam)\rho_{\rm i}K(\lam)^\da$, where $K(\lam)=K_{t\geq0}(\lam)$ is defined as a propagator $\mathcal{T}e^{-i\int_0^tdsH_\lam}$ with $\hbar\equiv1$. Here the time-ordering symbol $\mathcal{T}$ accommodates the general situations, in which $H_{\lam}$ can be time-dependent. A particular evolution given above is not enough to obtain the real evolution of the open system under noise. The configuration of environmental variables is in general unknown and may be assumed to be described by a probability distribution $|p_\lambda|^2$. The final (evolved) state of the system must be the average over all possible unitary evolutions of the type just described and therefore
\begin{equation}\label{Kraus}
\rho_{\rm f}=\sum_\lam|p_\lam|^2K(\lam)\rho_{\rm i}K(\lam)^\da,
\end{equation}
with $\sum_\lam|p_\lam|^2=1$. This is a general expression independent of the details of the  system Hamiltonian $H_0(t)$. The above equation is a special case of the Kraus operator representation of open system dynamics, which we briefly recap in appendix~\ref{KR}.

With this definition, we introduce the following no-go theorem:
\begin{theorem}\label{limit}
For any quantum operation process describing uncertainty-induced decoherence defined by Eq.~(\ref{Kraus}), the system can not be completely transferred into a pure state unless it is prepared as one initially.
\end{theorem}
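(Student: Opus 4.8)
The plan is to run the implication in reverse: assume the output of Eq.~(\ref{Kraus}) is a pure state and deduce that the input must already have been pure. So suppose $\rho_{\rm f}=|\psi\ra\la\psi|$ for some unit vector $|\psi\ra$. The structural feature of Eq.~(\ref{Kraus}) that I would exploit is that it writes $\rho_{\rm f}$ as a sum (or integral) of positive semidefinite operators $M_\lam\equiv|p_\lam|^2K(\lam)\rho_{\rm i}K(\lam)^\da\geq0$, one for each realization of the noise, with $\sum_\lam{\rm Tr}\,M_\lam={\rm Tr}\,\rho_{\rm f}=1$.

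The first step is a purely linear-algebraic observation: if a sum of positive operators equals a rank-one operator, then every term is a nonnegative multiple of it. For each fixed $\lam$ the complementary terms $\rho_{\rm f}-M_\lam=\sum_{\mu\neq\lam}M_\mu$ form a positive operator, so $0\leq M_\lam\leq|\psi\ra\la\psi|$. For any $|v\ra$ orthogonal to $|\psi\ra$ this gives $\la v|M_\lam|v\ra\leq|\la\psi|v\ra|^2=0$, and positivity of $M_\lam$ forces $M_\lam|v\ra=0$; hence $M_\lam$ is supported on the one-dimensional span of $|\psi\ra$, i.e.\ $M_\lam=c_\lam|\psi\ra\la\psi|$ with $c_\lam\geq0$. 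Taking the trace and using cyclicity together with unitarity of $K(\lam)$, ${\rm Tr}\,M_\lam=|p_\lam|^2{\rm Tr}\,\rho_{\rm i}=|p_\lam|^2$, so $c_\lam=|p_\lam|^2$ and therefore $K(\lam)\rho_{\rm i}K(\lam)^\da=|\psi\ra\la\psi|$ for every $\lam$ with $p_\lam\neq0$ (at least one such $\lam$ exists since $\sum_\lam|p_\lam|^2=1$).

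The second step uses invertibility: since $K(\lam)$ is unitary we may conjugate back to get $\rho_{\rm i}=K(\lam)^\da|\psi\ra\la\psi|K(\lam)=|\phi\ra\la\phi|$ with $|\phi\ra\equiv K(\lam)^\da|\psi\ra$ again a unit vector. Hence $\rho_{\rm i}$ is pure, which is exactly the contrapositive of the claim: a genuinely mixed $\rho_{\rm i}$ can never be mapped to a pure state by a channel of the form~(\ref{Kraus}). The same line of reasoning, applied to an arbitrary eigenprojector instead of a rank-one one, is what I would expect to yield the quantitative inequality on eigenstate populations referred to in the abstract, since averaging over unitary conjugates can only ``flatten'' the spectrum.

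I expect the only point needing real care to be the support/rank step: checking that $0\leq M_\lam\leq\rho_{\rm f}$ and the ensuing annihilation of $|\psi\ra^{\perp}$ go through verbatim whether $\lam$ runs over a finite set, a countable set, or a continuum (replacing the sum by the appropriate positive-operator-valued measure/integral), and noting that this step uses nothing about $K(\lam)$ beyond boundedness. Unitarity of $K(\lam)$ enters only in the final step, to transport the rank-one conclusion from $K(\lam)\rho_{\rm i}K(\lam)^\da$ back to $\rho_{\rm i}$; without invertibility one could conclude only that $\rho_{\rm i}$ lies in the preimage of a rank-one operator, which need not itself be rank one.
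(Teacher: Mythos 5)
Your proof is correct, but it takes a genuinely different route from the paper's. The paper diagonalizes both $\rho_{\rm i}$ and $\rho_{\rm f}$, absorbs the diagonalizing unitaries into $K(\lam)$, and writes $Q_m=\sum_{\lam,n}|p_\lam|^2|K_{mn}(\lam)|^2P_n$; the row normalization $\sum_n|K_{mn}(\lam)|^2=1$ of the unitaries then gives the quantitative bound $Q_m\leq P_1$ for every $m$, of which the no-purification statement is an immediate corollary ($P_1<1\Rightarrow Q_1<1$). You instead argue by a support/rank-one decomposition: if $\rho_{\rm f}=|\psi\ra\la\psi|$, then each positive term $M_\lam=|p_\lam|^2K(\lam)\rho_{\rm i}K(\lam)^\da$ satisfies $0\leq M_\lam\leq|\psi\ra\la\psi|$, is annihilated on $|\psi\ra^\perp$, hence equals $|p_\lam|^2|\psi\ra\la\psi|$, and unitarity pulls purity back to $\rho_{\rm i}$. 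This is a clean, basis-free argument that proves exactly the stated theorem and in fact yields slightly more information in the pure-output case (every realized unitary with $p_\lam\neq0$ must map $\rho_{\rm i}$ to the \emph{same} pure state). What it does not directly deliver is the paper's inequality $Q_1\leq P_1$, which is the quantitative content used later for the approximate-cooling and temperature discussion; your closing remark about applying the argument to eigenprojectors to ``flatten the spectrum'' is only a gesture in that direction. A short bridge would close this: $Q_1=\max_{|\phi\ra}\la\phi|\rho_{\rm f}|\phi\ra\leq\sum_\lam|p_\lam|^2\,\|K(\lam)\rho_{\rm i}K(\lam)^\da\|=P_1$, using the variational characterization of the largest eigenvalue and unitary invariance of the operator norm, which recovers the paper's bound without diagonalization. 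Your caveats about the continuum case and about where unitarity enters are well placed and do not affect the validity of the argument for the theorem as stated.
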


\begin{proof}
Without loss of generality, both initial and final states $\rho_{\rm i}$ and $\rho_{\rm f}$ can {\it always} be written in their respective diagonal forms as $P={\rm diag}[P_1, P_2, \cdots]$ and $Q={\rm diag}[Q_1, Q_2, \cdots]$, respectively. In other words, suppose $\rho_{\rm i}=W^\da PW$ and $\rho_{\rm f}=V^\da QV$, where $W$ and $V$ are unitary operators that diagonalize the initial and final states of the system, respectively, then one can redefine $K(\lam)$ as $VK(\lam)W^\da$. Equation~(\ref{Kraus}) can be rewritten as $Q_m=\sum_{\lam,n}|p_\lam|^2|K_{mn}(\lam)|^2P_n$. Throughout this work, we consider the eigenstate populations to be ordered in decreasing order: $1\geq P_1\geq P_2\geq P_3\cdots$, and $1\geq Q_1\geq Q_2\geq Q_3\cdots$. Using this as well as the normalization conditions $\sum_n|K_{mn}(\lam)|^2=1$ and $\sum_\lam|p_\lam|^2=1$, the following inequality holds,
\begin{equation}\label{cool}
Q_m\leq P_1\sum_{\lam,n}|p_\lam|^2|K_{mn}(\lam)|^2=P_1,
\end{equation}
which implies, in particular, that $Q_1\leq P_1$, {\it i.e.}, the maximally occupied eigenstate of the final state cannot have a larger population than that of the corresponding initial state. So if $P_1<1$ for a mixed initial state, then $Q_1<1$ and $\rho_f$ can not be a pure state.
\end{proof}

At a microscopic level, the strong notion of cooling corresponds to demanding that the population of the ground state increases during the cooling process. In particular, this is in strict agreement with the phenomenology of cooling when both the initial and final states are Gibbs-ensemble equilibrium states. Ideal cooling is attained when the final state is the ground state of the system $|0\ra\la0|$. \textbf{Theorem}~\ref{limit} shows that cooling even in an approximate sense, {\it i.e.,} increasing the population of the ground state by an arbitrarily small amount is impossible under solely classical decoherence with the implicit constraint of initial system-environment factorization.

In order to turn this microscopic picture into a macroscopic one, consider the initial and final states to be thermal Gibbs states at two respective temperatures $T_i$ and $T_f$. It is simplest to consider a two level system, although the same arguments hold for a system with many energy levels. For such a system, the initial and final temperatures are given by
\begin{equation}\label{tem}
T_{\rm i}=\frac{\om_i}{\kappa_B}\ln\frac{P_1}{1-P_1}, \quad T_{\rm f}=\frac{\om_f}{\kappa_B}\ln\frac{Q_1}{1-Q_1},
\end{equation}
where the initial and final energy spacings of the system are $\om_{i, f}$, and the initial and final populations of the ground state are $P_1$ and $Q_1$, respectively. Since $Q_1\leq P_1$ due to our no-go theorem, $T_{\rm f}\leq \frac{\om_i}{\om_f}T_{\rm i}$. Hence to surely cool the system, one must impose $\om_f>\om_i$ which can only be achieved by doing work on the system resulting in the changing of the spectral properties of the system. However, we are here considering the conventional approach to cooling (in particular to the ground state of) a Hamiltonian which is the same at the initial and final instants. So we have that $\om_f=\om_i$ and in terms of temperature one obtains $T_f \geq T_i$, {\it i.e.,} the temperature of the system cannot be reduced.

Classical noise is expected to be the most common source of disturbance in many open quantum systems, such as telegraph noise in ion traps~\cite{CN1} or $1/f$ noise in solid systems~\cite{CN2}. We note that the specific (implicit) time-dependence of Kraus operators in Eq.~(\ref{Kraus}) depends on the statistical features of the classical noise for the system. When the noise correlation function at different times is proportional to the $\delta$-function, the dynamics of the density matrix is equivalent to that described by the conventional Lindblad master equation. In literatures, this corresponds to white noise or Markovian noise. Otherwise, the noise is colored or non-Markovian. We emphasize here that independently of these characteristics, classical noise is always characterised by a group of unitary transformation $K(\lam)$'s and our discussion holds in general.

\section{Consequences of the no-go theorem}

The classical noisy Hamiltonian is realised by adding stochastic processes to the system's Hamiltonian. It is meaningful in various physical situations, where ambient noise is assumed to be additive under a certain probability distribution. The final state or dynamics of the system is then obtained by an ensemble average of the form~\eqref{Kraus}. For instance, it is sufficient to treat the hyperfine interaction between the electron spin and environmental nuclear spin as a classical entity which describes the inhomogeneous broadening process of the electron spin confined in a quantum dot~\cite{HB}. We explicitly illustrate the general result (\textbf{Theorem}~\ref{limit}) through three concrete examples of the action of classical noise on popularly studied systems in quantum control theory.

\subsection{A single two-level system}

Suppose the evolution of a two-level atomic system, initially in the state ${\rm diag}([P_1, P_2])$ (assuming $P_1\geq P_2$ without loss of generality), can be described by the Kraus representation
\begin{eqnarray}\label{Kraus2}
\ti{\rho}_{\rm f}=\sum_{k=1}^N\lam_kE_k\rho_{\rm i}E_k^\da, \quad E_k=\left(\begin{array}{cc}\cos\theta_k & i\sin\theta_k \\ i\sin\theta_k & \cos\theta_k \end{array}\right)
\end{eqnarray}
where $\lam_k\geq0$, $\sum_{k=1}^N\lam_k=1$ and $\theta_k$ represents the $k$-th realization of stochastic disturbance. $E_k$ describes a general and random unitary transformation characterized by $\theta_k$ for arbitrary two-level systems. Physically, this random channel can describe an electron spin interacting with a magnetic field subject to small stochastic fluctuations along the $z$- and $x$-directions. One can keep in mind that Eq.~(\ref{Kraus2}) is a particular realization of Eq.~(\ref{Kraus}), so that here $\theta_k$ and $\lam_k$ correspond to $\lam$ and $|p_\lam|^2$, respectively. The diagonal representation of the final state $\ti{\rho}_{\rm f}$ is given by the populations $Q_1$ and $Q_2$ which can be expressed by $(1\pm X)/2$, where $X$ can be measured by the auxiliary quantity $Y\equiv X^2/(2P_1-1)^2$ below. We have
\begin{eqnarray*}
Y&=&\left(\sum_{k=1}^N\lam_k\cos2\theta_k\right)^2
+\left(\sum_{k=1}^N\lam_k\sin2\theta_k\right)^2 \\
&\leq& \sum_{k=1}^N\lam_k^2\left(\sum_{k=1}^N\cos^2 2\theta_k+\sum_{k=1}^N\sin^22\theta_k\right)=N\sum_{k=1}^N\lam^2_k.
\end{eqnarray*}
According to the condition of Cauchy$-$Schwarz inequality, the maximal value of $Y$ is attained iff $\lam_k/\cos2\theta_k=c_1$ and simultaneously $\lam_k/\sin2\theta_k=c_2$, where $c_1$ and $c_2$ are constant numbers independent of $k$. In this case, $\lam_k$ must be taken as $1/N$, which is independent on $k$, in order for $Y$ as well as $X$ and $Q_1$ or $Q_2$ to achieve the maximum value. Then we have found $\max\{Y\}=1$. So that $X\leq 2P_1-1$, and $Q_1, Q_2\leq P_1$ as advertised.

\subsection{A mechanical resonator (MR)}

Consider a doubly-clamped mechanical resonator embedded in a flux-qubit circuit (serving as an auxiliary qubit), which is composed of superconducting loops with Josephson junctions~\cite{MR1,MR2}. An in-plane magnetic field $B$ induces qubit-MR coupling via a Lorentz force. Upon tuning the tunneling amplitude $\De$ between the two persistent current states $|\downarrow\ra$ and $|\uparrow\ra$ to be near resonant with the qubit-MR frequency $\om_m$, but much larger than the qubit-MR coupling constant $g$, it is proper to approximate the Hamiltonian as $H=\om_ma^\da a+\De\si_z/2+g(a\si_++a^\da a_-)$. Here the $\si_{z,+,-}$ are the Pauli operators in the new basis of ground and excited states, $|g/e\ra\equiv(|\downarrow\ra-/+|\uparrow\ra)/\sqrt{2}$. The Hamiltonian can therefore be diagonalized into  $H=-\De/2|0,g\ra\la0,g|+\sum_{n,s=\pm}\ep_n^s|ns\ra\la ns|$, where the dressed eigenstates are $|0g\ra$ and  $|n+\ra=\cos\al_n|n-1,e\ra+\sin\al_n|n,g\ra$, and $|n-\ra=\sin\al_n|n-1,e\ra-\cos\al_n|n,g\ra$, with $n\geq1$. Here $\tan2\al_n=2g\sqrt{n}/(\De-\om_m)$. The MR system is assumed to be under the influence of a classical noisy Hamiltonian $H_I(\lam_k)=\theta_ka^\da a$ induced by the random pressure from the phonons of mechanical oscillator. Thus, from a diagonal state $\rho_{\rm i}=\rho_{n\rm i}^{\bigoplus n}$, the final state can be obtained by $\ti{\rho}_{\rm f}=\ti{\rho}_{n\rm f}^{\bigoplus n}$, which is in a block-diagonal formation. For $n=0$, $\ti{\rho}_{0\rm f}=\ti{\rho}_{0\rm i}$. When $n\geq1$,
\begin{equation}
\ti{\rho}_{n\rm f}=\sum_{k=1}^N\lam_kE_{k,n}\rho_{n \rm i}E_{k,n}^{\da},
\end{equation}
where the four elements of each $2\times2$ blocks $E_{k,n}$ are
\begin{eqnarray} \non
E_{k,n}^{11}&=&e^{-i\theta_k(n-1)}
(\cos^2\al_n+e^{-i\theta_k}\sin^2\al_n),  \\  \non E_{k,n}^{12}&=&e^{-i\theta_k(n-1)}
(1-e^{-i\theta_k})\cos^2\al_n\sin^2\al_n, \\   \non
E_{k,n}^{21}&=&E_{k,n}^{12}, \\
E_{k,n}^{22}&=&e^{-i\theta_k(n-1)}
(\sin^2\al_n+e^{-i\theta_k}\cos^2\al_n).
\end{eqnarray}
Suppose in each $2\times2$ block of $\rho_{\rm i}$, $P_{1n}\geq P_{2n}$ and $P_{1n}+P_{2n}=P_n$, $\ti{\rho}_{0\rm i}+\sum_{n\geq1}P_n=1$. After a straightforward derivation, the two populations $Q_{1n}$ and $Q_{2n}$ in the $n$-th block of $\rho_{\rm f}$ are evaluated as $(P_n\pm X_n)/2$. The auxiliary quantity $Y_n\equiv X_n^2/(P_{1n}-P_{2n})^2$ that measures the difference between $Q_{1n}$ and $Q_{2n}$ is found to be
\begin{equation*}
Y_n=\sum_{j=1}^3\left(\sum_{k=1}^N\lam_k\mu^{(j)}_{kn}\right)^2 \leq\left(\sum_{k=1}^N\lam_k^2\right)\sum_{j=1}^3\sum_{k=1}^N
\left(\mu^{(j)}_{kn}\right)^2,
\end{equation*}
where $\mu^{(1)}_{kn}\equiv(1-\cos\theta_k)\sin4\al_n$, $\mu^{(2)}_{kn}\equiv\sin\theta_k\sin2\al_n$, and $\mu^{(3)}_{kn}\equiv1-2\sin^2\frac{\theta_k}{2}\sin^22\al_n$. Due to the fact that $\sum_{j=1}^3(\mu^{(j)}_{kn})^2=1$ and the Cauchy$-$Schwarz inequality, $Y_n$ as well as $X_n$ and $Q_{1n}$ or $Q_{2n}$ achieves the maximum value also iff $\lam_k=1/N$. So that $|Q_{1n}-Q_{2n}|\leq|P_{1n}-P_{2n}|$ and then $Q_{1n}, Q_{2n}\leq P_{1n}$.

\subsection{A three-level system}

Consider the stimulated Raman adiabatic passage (STIRAP)~\cite{STIRAP} in a three-level atomic system, which targets the perfect population transition between $|0\ra$ and $|2\ra$ without disturbing the quasistable state $|1\ra$. The system can be adiabatically evolved from $|0\ra$ to $-\sin\theta|2\ra+\cos\theta|0\ra$ under a time evolution operator with two parameters~\cite{IE} as,
\begin{equation*}
U(\theta,\alpha)=\left(\begin{array}{ccc}
\cos\theta\cos\al & \cos\theta\sin\al & -\sin\theta \\
 -\sin\al & \cos\al & 0 \\
\sin\theta\cos\al & \sin\theta\sin\al & \cos\theta
\end{array}\right).
\end{equation*}
Physically, this is a standard time-evolution operator used to inversely engineer the STIRAP process. Classical noise would fluctuate the parameters in $U(\theta,\alpha)$. One can then let $\theta\rightarrow\theta_k$ or $\alpha\rightarrow\alpha_k$, meaning the parameters are no longer stable but fluctuate due to noise, and then random unitary transformation $E_k=U(\theta_k,\alpha)$ or $E_k=U(\theta,\alpha_k)$ is applied to analysis the effect from two types of classical noise channels on the system.

\begin{figure}[htbp]
\centering
\includegraphics[width=3in]{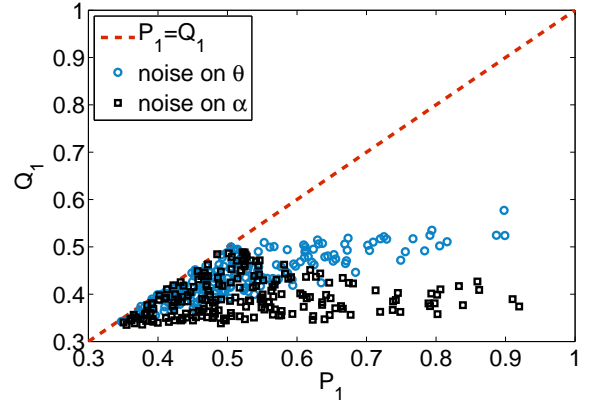}
\caption{Numerical comparison of the final and initial ground state populations $Q_1$ vs. $P_1$ for a three-level system subject to noisy stimulated adiabatic passage. Considered are randomly generated initial states, noise distributions $\lam_k$'s, and classical noise affecting the parameter $\theta$ (circles, with $\alpha$ fixed as $\cos\alpha=\sqrt{1/3}$) and $\alpha$ (squares, with $\theta$ fixed as $\cos\theta=\sqrt{7/10}$). }\label{cool3L}
\end{figure}

Substituting the Kraus operator $E_k$ into Eq.~(\ref{Kraus2}), we calculate $\ti{\rho}_{\rm f}$ by random chosen probability $\lam_k$ and then obtain the diagonalized form of $\rho_{\rm f}$. In Fig.~\ref{cool3L}, we compare the numerical results of $200$ random generated $P_1$ and the corresponding $Q_1$ obtained in different configurations of classical noise. Each $Q_1$ is ensemble averaged with $N=100$ noisy time-evolution operators $E_k$ [see Eq.~(\ref{Kraus2})]. We can see for both $\theta$ and $\alpha$, classical noise prevents $Q_1$ from exceeding $P_1$. For $P_1\lesssim0.5$, certain realization of noise allow $Q_1$ to be equal to $P_1$. On the other hand, for $P_1\gtrsim0.5$, the difference between $Q_1-P_1$ roughly increases with $P_1$.

\section{Extension to  quantum channels}

Our theory on classical noise can be extended to certain channels of quantum systems. One can provide a sufficient condition on the general Kraus representation of the action of noise for even approximate cooling to be impossible. Let $\hat{Q}=\sum_{\lam}E_\lam\hat{P}E_\lam^\da$, where $\hat{P}$ and $\hat{Q}$ indicate the diagonalized $\rho_i$ and $\rho_f$, respectively. The individual populations therefore satisfy
\begin{eqnarray}\non
Q_m&=&\la m|\hat{Q}|m\ra=\sum_{\lam,n}\la m|E_\lam|n\ra\la n|\hat{P}|n\ra\la n|E_\lam^\da|m\ra\\ &=&
\sum_{\lam,n}|\la m|E_\lam|n\ra|^2P_n\leq P_1\sum_{\lam,n}|\la m|E_\lam|n\ra|^2,
\end{eqnarray}
where $P_n=\la n|\hat{P}|n\ra$. Now, we notice that if for arbitrary $m$, $\sum_{\lam,n}|\la m|E_\lam|n\ra|^2\leq1$, then $Q_m\leq P_1$, which yields the claimed result $Q_1\leq P_1$. As an illustration for a two-level system, see the following list of Kraus operators that satisfy the normalization condition $\sum_\lam E_\lam^\da E_\lam=1$ and are commonly used in the theory of error-correction. The first example is the bit flip channel flips the state of a qubit from $|0\ra$ to $|1\ra$ (and vice versa) with probability $1-p$. It has operation elements:
\begin{equation*}
E_1=\left(\begin{array}{cc}\sqrt{p} & 0 \\ 0 & \sqrt{p}  \end{array}\right), \quad
E_2=\left(\begin{array}{cc}0 & \sqrt{1-p}  \\ \sqrt{1-p}  & 0 \end{array}\right).
\end{equation*}
The second one is the phase flip channel with operation elements
\begin{equation*}
E_1=\left(\begin{array}{cc}\sqrt{p} & 0 \\ 0 & \sqrt{p}  \end{array}\right), \quad
E_2=\left(\begin{array}{cc}\sqrt{1-p} & 0 \\ 0 & -\sqrt{1-p}  \end{array}\right).
\end{equation*}
The third one, the bit-phase flip channel is characterized by
\begin{equation*}
E_1=\left(\begin{array}{cc}\sqrt{p} & 0 \\ 0 & \sqrt{p}  \end{array}\right), \quad
E_2=\left(\begin{array}{cc}0 & -i\sqrt{1-p}  \\ i\sqrt{1-p}  & 0 \end{array}\right).
\end{equation*}
The fourth one is the depolarizing channel, which is an important type of quantum noise. To lead the qubit to be depolarized with probability $p$, the Kraus operation elements are chosen as $E_1=\sqrt{1-3p/4}\mathcal{I}$, $E_2=\sqrt{p}\si_x/2$, $E_3=\sqrt{p}\si_y/2$, and $E_4=\sqrt{p}\si_z/2$, where $\mathcal{I}$ and $\si$'s are identity operator and Pauli operators, respectively. For all of the above quantum operations, one can check that $\sum_{\lam,n}|\la m|E_\lam|n\ra|^2=1$ for either $m=1$ or $m=2$, so that $Q_1=P_1$, which means no cooling can take place. 

\section{Discussion and Conclusion}

One could hope that cooling might be realized with minimal resources, such as under the influence of classical noise. Here, we have shown that even approximate cooling to the ground state is impossible under such conditions assuming no work is done on the system. This may be viewed as a fundamental limitation in the theory of open system dynamics. In principle it means that cooling methods described in a standard way, using initial system-environment state factorization, must include quantum noise (a finite-temperature quantum environment) and/or feedback mechanisms based on relevant measurements allowing extraction of information from the system. These protocols cannot be generally described in terms of the random channel~\eqref{Kraus}. For example, the dynamics induced by projective measurements is generated by a non-Hermitian operator~\cite{YuasaPRL03,LiPRB11}. It is noteworthy to state here that our result is independent of system dimension, as illustrated by the examples presented. As a direct application, our no-go theorem implies, e.g., that the exciton energy transfer in light-harvesting complexes at room temperature~\cite{bio} is assisted by non-classicality of the molecular vibrations. The discussion of whether or not this process bears quantum features was an interesting problem brought up in the Ref.~\cite{bio} and was found to be a non-trivial problem to resolve. 

We believe that it is important to understand  constraints on cooling mechanisms under various types of system-environment couplings. We mention here that not all systems can even be cooled quantum-mechanically. In fact the cooling rate and the lowest achievable steady temperatures for optomechanical system~\cite{sideband} and micromechanical system~\cite{lasercool} are determined by the resonator's quantum fluctuations (photon shot noise). In certain regime when the frequency of the mechanical system is smaller than the decay rate of the cavity, the cooling might fail even with asymmetry in the noise spectrum. We recall that spectral asymmetry -- which means that excitation and dexcitation of the system via the environment are inequivalent processes -- is the usual condition required to achieve cooling of a quantum system. Similarly, the ground-state cooling is impossible for initial phonon numbers larger than mechanical quality factor. Experimentally~\cite{lasercool}, as the cooling laser power is increased, it is shown that the system will arrive at the quantum backaction limit, with equal sideband heights as the mechanical resonator comes into equilibrium with the optical bath. 

Finally, we note that our results offer a different perspective to that provided by existing specific limitations on cooling protocols, such as the recently shown limitation on the amount of steady state entanglement that can be generated when subjecting the system only to local dissipation~\cite{steady}.  The latter limitation means that systems with sufficiently entangled states cannot be cooled to the ground state under local dissipative processes.

\appendix
\section{Kraus representation emerging in the open quantum system dynamics}\label{KR}

The time evolution of an open quantum system plus its environment is governed by some joint unitary propagator $U(t)$, which describes the evolution of the joint density matrix $\rho(0)$ from time $t=0$ to time $t$, {\it i.e.,} $\rho(t)=U(t)\rho(0)U^\da(t)$. The reduced dynamics of the system is commonly described in terms of a master equation, or equivalently is mathematically described as a quantum channel which technically is a completely positive trace preserving map. The reduced density matrix describing the system can be represented by $\rho_{\rm f}=\sum_k\la e_k|U\rho(0)U^\da|e_k\ra$, where $|e_k\ra$ is an orthonormal basis for the environment. It is convenient to choose the environmental basis to be the one that diagonalizes the state of the environment at $t=0$,  $\rho_B(0)=\ep_l|e_l\ra\la e_l|$. In general, assumption of initial independence of system and environment, {\it i.e.}, the factorized form $\rho(0)=\rho_{\rm i}\otimes\rho_B(0)$, is necessary for the derivation of the master equation (or quantum channel) describing the dynamics. Under this condition a sum-up representation of the open system evolution can always be expressed in the form $\rho_{\rm f}=\sum_{kl}\ep_lE_{kl}\rho_{\rm i}E_{kl}^\da$, where $E_{kl}\equiv\la e_k|U(t)|e_l\ra$'s are so-called Kraus operators which satisfy the normalization condition $\sum_{kl}\ep_lE_{kl}^\da E_{kl}=1$.

\acknowledgments
We acknowledge grant support from the Basque Government (grant IT472-10), the Spanish MICINN (No. FIS2012-36673-C03-03), the National Science Foundation of China No. 11575071 and Science and Technology Development Program of Jilin Province of China (20150519021JH).


\begin{thebibliography}{99}

\bibitem{gscool}
I. Wilson-Rae, N. Nooshi, W. Zwerger, and T. J. Kippenberg, \prl{99}, 093901 (2007).

\bibitem{sideband}
F. Marquardt, J. P. Chen, A. A. Clerk, and S. M. Girvin, \prl{99}, 093902 (2007).

\bibitem{gscool2}
C. Genes, D. Vitali, P. Tombesi, S. Gigan, and M. Aspelmeyer, \pra{77}, 033804 (2008).

\bibitem{YuasaPRL03}
H. Nakazato, T. Takazawa, and K. Yuasa, \prl{90}, 060401 (2003).

\bibitem{LiPRB11}
Y. Li, L.-A. Wu, Y.-D. Wang, and L.-P. Yang, \prb{84}, 094502 (2011).

\bibitem{BreuerBook}
H. P. Breuer and F. Petruccione, {\it Theory of Open Quantum Systems} (Oxford, New York, 2002).

\bibitem{Lewenstein-book}
M. Lewenstein, A. Sanpera, and V. Ahufinger, {\it Ultracold Atoms in Optical Lattices: Simulating quantum many-body systems}, Oxford University Press (2012).

\bibitem{Zoller95}
J. I. Cirac and P. Zoller, \prl{74}, 4091 (1995).

\bibitem{AQC1}
E. Farhi, J. Goldstone, S. Gutmann, M. Sipser, arXiv:quant-ph/0001106.

\bibitem{AQC2}
E. Farhi, et.al., \sci{292}, 5516 (2001).

\bibitem{AQC3}
D. Aharonov, and et.al., in Proceedings of the 45th Annual Symposium on the Foundations of Computer Science, Rome, Italy (IEEE Computer Society Press, New York, 2004).

\bibitem{Dwave}
S. Boixo, et.al., M. Troyer, Nat. Phys. {\bf 10}, 218 (2014).

\bibitem{Zeilinger}
R. Ursin, et.al., Nat. Phys. {\bf 3}, 481 (2007).

\bibitem{DNP1}
A. J. Rossini, et.al., J. Am. Chem. Soc. {\bf 136}, 2324 (2014).

\bibitem{DNP2}
I. Neder, M. S. Rudner, and B. I. Halperin, \prb{89}, 085403 (2014).

\bibitem{WisemanBook}
H. M. Wiseman, G. J. Milburn {\em Quantum measurement and control} (Cambridge University Press, Cambridge, 2009).

\bibitem{bang}
P. Zhang, Y. D. Wang, and C. P. Sun, \prl{95}, 097204 (2005).

\bibitem{shot}
X. Wang, S. Vinjanampathy, F. W. Strauch, and K. Jacobs, \prl{107}, 177204 (2011).

\bibitem{sb1}
C. H. Metzger and K. Karrai, \nat{432}, 1002 (2004).

\bibitem{sb2}
O. Arcizet, P.-F. Cohadon, T. Briant, M. Pinard, and A. Heidmann, \nat{444}, 71 (2006).

\bibitem{sb3}
T. J. Kippenberg and K. J. Vahala, \oe{15}, 17172 (2007).

\bibitem{sb4}
Y. Li, Z. D. Wang, and C. P. Sun, Eur. Phys. J. D {\bf 61}, 215 (2011).

\bibitem{Berry1}
M. V. Berry, Proc. R. Soc. Lond. A {\bf 392}, 45 (1984).

\bibitem{Berry2}
M. V. Berry, \jpa{42}, 365303 (2009).

\bibitem{Zanardi}
P. Zanardi and M. Rasetti, \pla{264}, 94 (1999).

\bibitem{Lidar1}
M. S. Sarandy and D. A. Lidar, \pra{71}, 012331 (2005).

\bibitem{Lidar2}
M. S. Sarandy and D. A. Lidar, \prl{95}, 250503 (2005).

\bibitem{Lidar3}
M. S. Sarandy and D. A. Lidar, \pra{73}, 062101 (2006).

\bibitem{GsC1}
G.-X. Li and J.-P. Zhu, \jpb{44}, 195502 (2011).

\bibitem{GsC2}
A. Reiserer, C. N\"olleke, S. Ritter, and G. Rempe, \prl{110}, 223003 (2013).

\bibitem{GsC3}
P. Stadler, W. Belzig, and G. Rastelli, \prl{113}, 047201 (2014).

\bibitem{JocobsBook}
K. Jocobs, {\it Quantum Measurement Theory and its Applications}, Cambridge University Press, Cambridge (2014).

\bibitem{note}

\blue{Markovianity: {\bf a PRA referee's comment:} ``I am sorry, but one cannot regard Ornstein-Ulhenbeck as non-Markovian". {\bf Our reply:} This is incorrect since the manuscript only focuses on the processes driven by noise such as Ornstein-Ulhenbeck (O-U). It is clear that the O-U used in our context refers to an O-U ``driving'' noise, which can always drive a non-Markovian process. We are sorry if these words have misled the referee. We thought this was clear from the context of our manuscript. One can easily find the similar statements in numerous ``non-Markovian" literatures, for example, L. Ridolfi, P. D'Odorico, and F. Laio, {\it Noise-Induced Phenomena in the Environmental Sciences, Cambridge University Press, (2011).} On page 46 of this book, it is explicitly written that ``Notice that, although the O-U (Ornstein-Ulhenbeck) process is Markovian, a process driven by O-U noise is not Markovian.}

\blue{\bibitem{quantumclassical}
O.-P. Saira, V. Bergholm, T. Ojanen, and M. Moettoenen, \pra{75}, 012308 (2007). }

\bibitem{cbh1}
A. Mari and J. Eisert, \prl{108}, 120602 (2012).

\bibitem{cbh2}
B. Cleuren, B. Rutten, and C. Van den Broeck, \prl{108}, 120603 (2012).

\bibitem{WuSR13}
L.-A. Wu, D. Segal and P. Brumer, Sci. Rep, {\bf 3}, 1824 (2013).

\bibitem{Kraus2}
E. Anderssona, J. D. Cresserb and M. J. W. Hallc, J. M. Opt. {\bf 54}, 1695 (2007).

\bibitem{GardinerBook}
C. W. Gardiner and P. Zoller, {\it Quantum Noise} (Springer-Verlag, Berlin, Heidelberg, 2004).

\bibitem{SpinCnoise}
J.-P. Aguilar, and N. Berglund, J. Math. Phys. {\bf 49}, 102102 (2008).

\bibitem{note2}
\blue{{\bf The PRA referee's comment:} ``but let me note that the time-ordering is needed not only when $H_{\lambda}$ is time dependent, but also when $H_\lambda$ does not commute with $H_0$ (i.e. in all interesting cases). Since $H_0$ is not present in the evolution operator explicitly I am assuming that the Authors are working in the interaction picture, so that $H_\lambda$ under the exponent is in fact a time-dependent interaction picture operator given by $e^{iH_0 t} H_\lambda e^{-iH_0 t}$ (for time-independent $H_0$ for simplicity).'' {\bf Our reply}: This is also incorrect. We use $H_\lambda$ to represent the total Hamiltonian such that the evolution operator of $H_\lambda$ is time-ordering. We never use interacting picture in the whole manuscript and the referee¡¯s own assumption is not relevant. Note that we focus on classical noise so that $H_\lambda$ is the total Hamiltonian.}

\bibitem{CN1}
R. Blatt, H. H\"affner, C. F. Roos, C. Becher, F. Schmidt-Kaler, Quant. Inf. Proc. {\bf 3}, 61 (2004).

\bibitem{CN2}
E. Paladino, Y. M. Galperin, G. Falci, and B. L. Altshuler, \rmp{86}, 361 (2014).

\bibitem{HB}
{\L}. Cywi\'{n}ski, W. M. Witzel, and S. Das Sarma, \prb{79}, 245314 (2009).

\bibitem{MR1}
F. Xue, et.al., \njp{9}, 35 (2007).

\bibitem{MR2}
A. Fedorov, et.al., \prl{105}, 060503 (2010).

\bibitem{STIRAP}
P. Kr\'al, I. Thanopulos, and M. Shapiro, \rmp{79}, 53 (2007).

\bibitem{IE}
J. Jing, L.-A. Wu, M. S. Sarandy, and J. G. Muga, \pra{88}, 053422 (2013).

\bibitem{bio}
E. J. O\'Reilly and A. Olaya-Castro, Nat. Comm. {\bf 5}, 3012 (2014).

\bibitem{lasercool}
R. W. Peterson, et.al, \prl{116}, 063601 (2016).

\bibitem{steady}
M. \v{Z}nidari\v{c}, \pra{91}, 052107 (2015).

\end{thebibliography}
\end{document}